\newcommand{\complex}{\mathbb{C}} 
\newcommand{\Hil}{\mathcal{H}} 
\newcommand{\HilK}{\mathcal{K}} 
\newcommand{\Lin}{\mathcal{L}}
\newcommand{\Lnr}[1]{\Lin(#1)} 
\newcommand{\Bnd}[1]{\mathcal{B} (#1)} 
\newcommand{\Tr}[1]{\mathcal{T} (#1)} 
\newcommand{\St}[1]{\mathcal{S} (#1)} 
\newcommand{\id}{\mathds{1}}
\newcommand{\nul}{0} 
\newcommand{\tr}[1]{\operatorname{tr}\left[#1\right]} 
\newcommand{\obA}{\mathsf{A}}
\newcommand{\obB}{\mathsf{B}}
\newcommand{\obC}{\mathsf{C}}
\newcommand{\Fix}[1]{\operatorname{Fix}(#1)} 
\newcommand{\diag}{\operatorname{diag}}
\theoremstyle{definition}
\newtheorem{thm}{Theorem}
\newtheorem{defn}{Definition}
\newtheorem{cor}{Corollary}
\newtheorem{example}{Example}
\begin{document}
\title{Relation between the state distinction power and disturbance in quantum measurements}
\author{Ikko Hamamura}
 \email{hamamura@nucleng.kyoto-u.ac.jp}

\author{Takayuki Miyadera}
 \email{miyadera@nucleng.kyoto-u.ac.jp}

\affiliation{Department of Nuclear Engineering, Kyoto University, 6158540 Kyoto, Japan}

\date{\today}

\begin{abstract}
The measurement of an informative observable strongly disturbs a quantum state.
We examine the so-called information-disturbance relation by introducing order relations based on the state distinction power of an observable and a variety of non-disturbed observables
with respect to a channel,
and obtain qualitative and quantitative representations.
\end{abstract}

\pacs{03.65.Ta} 
\keywords{Quantum measurement, Information-Disturbance, Incompatibility}
\maketitle

\section{Introduction}\label{sec:introduction}
All the operation pairs are not simultaneously realizable
in quantum theory.
This concept of incompatibility\cite{Heinosaari2016}
is ubiquitous.
For example, the position and momentum of a single particle
cannot be
simultaneously measured;
therefore, these observables are incompatible.
While this example refers to the incompatibility between two observables,
the notion of incompatibility can be extended to general operations.
Each class of operation in the quantum theory is characterized by its outcome space.
An observable has a classical state space as its outcome space.
An operation whose outcome is a
quantum state is called a (quantum) channel.
A pair of operations (possibly belonging to
different classes) is called incompatible
if
there is no device that contains these operations as its parts.

In this study, we discuss the incompatibility between observables and channels.
This type of incompatibility has been extensively studied,
and various conditions have been obtained for ensuring
compatibility between
an observable and a channel \cite{Fuchs1996,Ozawa2003,Ozawa2004,Kretschmann2008,Buscemi2013,Busch2013}.
The compatibility conditions
exhibit an information-disturbance relation;
``strong'' disturbance is inevitable while measuring ``informative'' observables.
By introducing quantitative measures for the informative character of
an observable and the disturbing character of a channel,
the inequalities between an observable and a channel can be
identified;
these inequalities place
concrete limitations on compatible operations.
Recently an author proposed a qualitative (structural) representation
of the information-disturbance relation~\cite{Heinosaari2013}.
An observable $\obA$ (a channel $\Lambda_1$) is defined to be more informative (resp. less disturbing)
than $\obB$ (resp. $\Lambda_2$) if and only if $\obB$ (resp. $\Lambda_2$) can be obtained by $\obA$ (resp. $\Lambda_1$) followed by post-processing,
This definition considers the observable space (channel space)
to be a preordered space that is not linearly ordered.
Further, a relation in terms of the order structure was obtained.
Once a quantitative measure on each space respecting the order is introduced, its corresponding quantitative representation of
the information-disturbance relation is obtained.
Thus, this approach can reveal
the structure of the information-disturbance relations and
justify certain quantitative measures.

While the argument above was entirely based on
the post-processing induced order,
the order is not a unique one that has operational meaning.
The objective of this study
is to examine another qualitative representation of
the information-disturbance relation.
We employ an order in the observable space determined using
 the state distinction power\cite{Busch1989,Heinonen2005}.
Two order relations in the channel space are introduced by focusing on the invariant observables,
which can be naturally interpreted
in terms of the quantum non-demolition measurement \cite{Braginsky1980,Grangier1998,Lupascu2007}.
We demonstrate that these order relations are related to each other by qualitative
and quantitative information-disturbance relations.

The rest of the paper is organized as follows.
In section~\ref{sec:preliminaries} the fundamental
concepts used throughout the paper are introduced.
In section~\ref{sec:preorder},
an order relation in the observable space and two order relations in the channel space are introduced.
In section~\ref{sec:infodist} the relations among the orders
are discussed,
and the qualitative and quantitative relations are obtained.
In addition,
a sequential measurement is discussed
by comparing it with a result within the post-processing order.

\section{Preliminaries}\label{sec:preliminaries}
Let $\Hil$ be a separable (possibly infinite-dimensional)
Hilbert space. Throughout the paper, we consider a quantum system
described by $\Hil$.
We denote by $\Bnd{\Hil}$ the set of bounded operators on $\Hil$.
A normal state is represented by a positive operator with unit trace (i.e., a density operator).
We denote the set of density operators as $\St{\Hil}$.
$\St{\Hil}$
spans the set of trace class operators $\Tr{\Hil}$.
On $\Tr{\Hil}$ one can define a norm $\Vert T\Vert_1 \coloneqq \tr{|T|}$,
which makes this space a Banach space.
Its dual is $\Bnd{\Hil}$ which is also
a Banach space with respect to
$\Vert A\Vert \coloneqq \sup_{T \in \Tr{\Hil}} \frac{|\tr{TA}|}{\Vert T\Vert_1}
=\sup_{|\psi\rangle \neq 0} \frac{\Vert A|\psi\rangle\Vert}{\Vert |\psi\rangle\Vert}$.
In addition to the norm topology induced by
$\Vert \cdot \Vert_1$, one can introduce a weak topology
by the dual structure; $T_n$ converges $T$ weakly
iff $|\tr{T_n A} -\tr{T A}|\to 0$ for all $A\in \Bnd{\Hil}$.
A weak$ ^*$ topology on $\Bnd{\Hil}$ is introduced similarly;
$A_n \to A$ in weak$ ^*$ topology iff $|\tr{TA_n} - \tr{T A} | \to 0$
for all $T \in \Tr{\Hil}$.
In this paper, we treat observables with at most countable outcome sets.
An observable whose outcome set is $\Omega$ is represented by a family of positive operators
$\obA=\set{\obA(x)}_{x\in\Omega}$ satisfying $\sum_{x\in\Omega} \obA(x) =\id$
which is called a positive-operator-valued measure (POVM),
where the summation over infinite $\Omega$ is defined with respect to
the weak$ ^*$ topology.
Hereafter we denote by $\Omega_{\obA}$ the outcome set of an observable $\obA$.
As this set is at most countable, it may be safely identified with a subset of
$\mathcal{N}$.
Suppose that we prepare a state described by a density operator
$\rho$ and measure an observable $\obA$.
Then we obtain each outcome $x \in \Omega_{\obA}$
with probability $\tr{\rho \obA(x)}$.
Thus each observable $\obA$ defines an affine map from the set of
states $\St{\Hil}$ to the set of probability distributions on its outcome set $\Set{\set{p(x)}_{x\in\Omega_{\obA}}| p(x)\geq 0, \sum_x p(x) =1}$.
While an observable is sufficient to describe the
statistical aspects of the classical outcomes of a measurement process, it does not specify any information on the
dynamical change of states.
A state change in general
is described by a map whose output space is a set of quantum states.
We call a map from $\St{\Hil}$ to $\St{\HilK}$ (where
$\mathcal{K}$
is an output Hilbert space)
a channel if it is affine and its natural extension to $\St{\Hil \otimes \complex^{N}}$ has its codomain in $\St{\HilK\otimes\complex^N}$ for each $N\in\mathbb{N}$.
This map, from $\St{\Hil}$ to $\St{\HilK}$,
can be linearly extended to
a trace-preserving completely positive map from $\Tr{\Hil}$ to $\Tr{\HilK}$.
We denote a dual map of $\Lambda$ by $\Lambda^*\colon
\Bnd{\HilK} \to \Bnd{\Hil}$ which is defined by
$\tr{\Lambda(T) B} = \tr{T \Lambda^*(B)}$ for all $T\in \Tr{\Hil}$.
This map $\Lambda^*$ is demonstrated to be weak$ ^*$ continuous.
In this paper, we examine channels whose output system coincides with the input system, i.e., it is assumed that $\HilK = \Hil$.
\par
A complete description of the measurement process is
given by a set of completely positive maps $\set{\mathcal{I}_x }_{x\in \Omega}$
called an instrument,
where $\mathcal{I}_x$ for each $x\in \Omega$ is a
completely positive map from $\Tr{\Hil}$ to itself
such that  $\sum_{x\in \Omega} \mathcal{I}_x$ is a channel,
where the summation is defined with respect to the weak topology.
An instrument $\set{\mathcal{I}_x}_{x\in \Omega}$
gives a POVM $\set{\mathcal{I}_x^*(\id)}_{x\in \Omega}$, where
$\mathcal{I}_x^*$ is a dual map of $\mathcal{I}_x$.
The instrument gives an unconditional state change
by a channel $\sum_x \mathcal{I}_x$.
An observable $\obA=\set{\obA(x)}_{x\in \Omega}$ and a channel $\Lambda$
is called {\em compatible} if and only if
there exists an instrument $\mathcal{I}= \set{\mathcal{I}_x}$
such that $\obA(x) = \mathcal{I}_x^*(\id)$ for each $x\in \Omega$ and
$\sum_{x\in \Omega} \mathcal{I}_x = \Lambda$ hold.
We call a channel $\Lambda$ an $\obA$-channel if and only if
an observable $\obA$ and $\Lambda$ are compatible.

It is well known that each channel can be represented
using an operator sum form (Kraus representation)\cite{Kraus1983}.
There exists, for the given channel $\Lambda$, a family of operators $\set{K_n} \subset\Bnd{\Hil}$ satisfying $\Lambda(\rho) = \sum_{n} K_n \rho K^*_n$,
where each $K_n$ is referred to as
a Kraus operator and satisfies $\sum_n K^*_n K_n = \id$.
While the Kraus representation for a given channel
is not unique, for a channel compatible with an observable $\obA$
there is a convenient Kraus representation
$\Lambda(\rho) = \sum_{x,j}K_{x,j}\rho K_{x,j}^*$
such that $\sum_j K_{x,j}^* K_{x,j} = \obA(x)$ holds for each $x
\in \Omega_{\obA}$.

\section{Preorders of the observables and channels}\label{sec:preorder}

\subsection{Order structure of the
observables induced by the state distinction power}
A pair of states $\rho_1$ and $\rho_2$ is called distinguishable by an observable $\obA$ if and only if the
probability distributions obtained by measuring $\obA$ with respect to
$\rho_1$ and $\rho_2$ are different;
otherwise, the pair is indistinguishable.
Thus $\rho_1$ and $\rho_2$ are distinguishable
by $\obA$ if and only if
there exists an outcome $x \in \Omega$ such that $\tr{\rho_1 \obA(x)}\neq \tr{\rho_2 \obA(x)}$; see Fig~\ref{fig:StateDistinguishability}.
\begin{figure}
\centering
\includegraphics[width=0.7\linewidth]{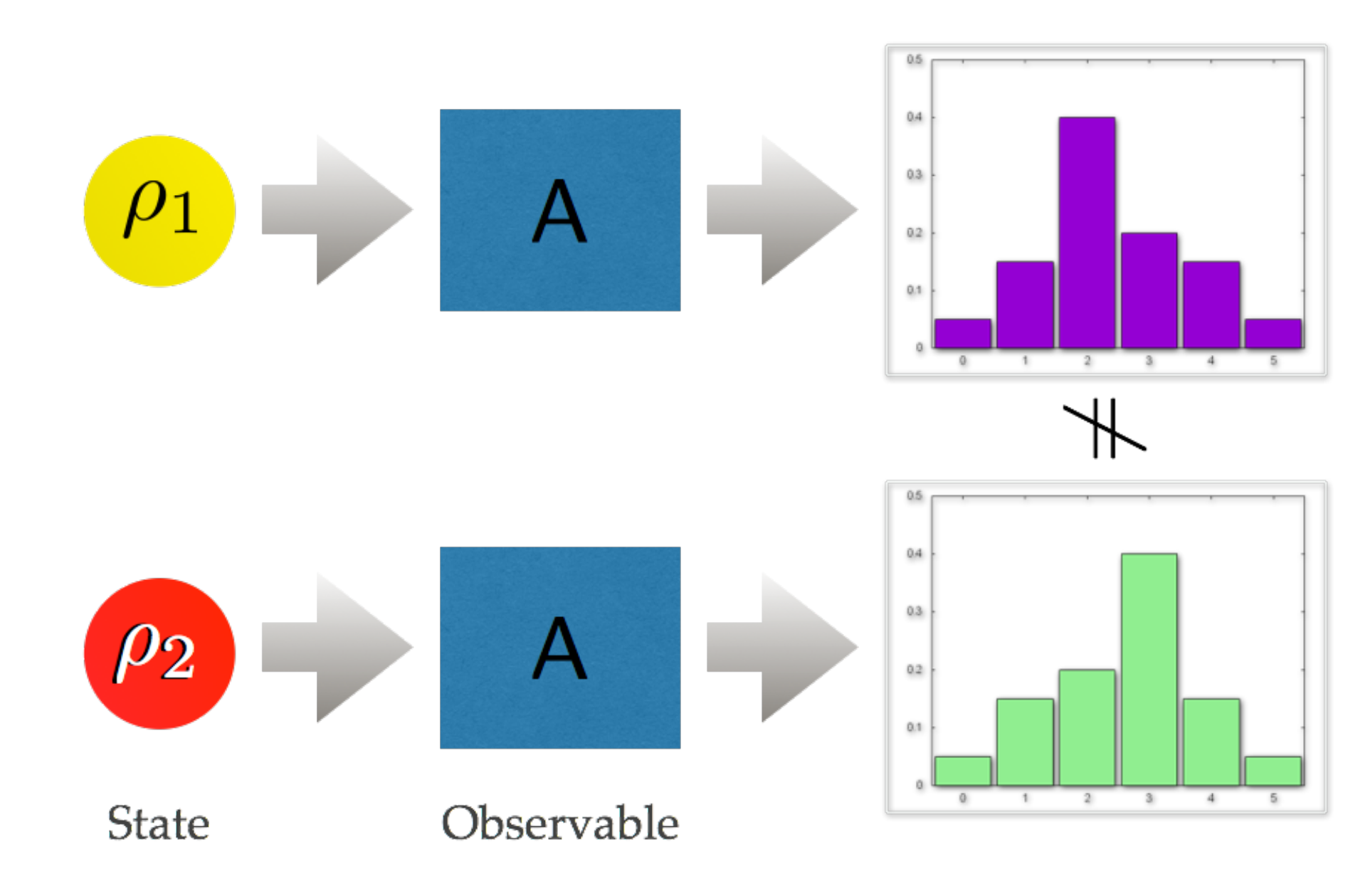}
\caption{
Discriminating a pair of states $\rho_1$ and $\rho_2$ from its probability distribution by measuring an observable $\obA$.}
\label{fig:StateDistinguishability}
\end{figure}
Let us consider a product space $\mathcal{S}(\Hil) \times \mathcal{S}(\Hil)
\subset \Tr{\Hil}\times \Tr{\Hil}$, on which a product topology of weak topology is introduced. Thus, a sequence $(\rho_n, \sigma_n)$ converges to $(\rho, \sigma)$
if and only if for any $(A, B) \in \Bnd{\Hil} \times \Bnd{\Hil}$
both $\lim_n \tr{\rho_n A} = \tr{\rho A}$ and $\lim_n \tr{\sigma_n B}
= \tr{\sigma B}$ hold.
One can introduce a subset $\Sigma_{\obA} \subset \mathcal{S}(\Hil)
\times \mathcal{S}(\Hil)$ for an observable $\obA$ by
\begin{eqnarray*}
\Sigma_{\obA}= \set{(\rho_1, \rho_2)\in \mathcal{S}(\Hil) \times \mathcal{S}(\Hil)|
\rho_1\mbox{ and } \rho_2 \mbox{ are not distinguishable by } \obA}.
\end{eqnarray*}
The following properties are observed to be valid:
\begin{itemize}
\item[(i)] $(\rho_1, \rho_2) \in \Sigma_{\obA}$, then $(\rho_2, \rho_1) \in \Sigma_{\obA}$.
\item[(ii)]$(\rho_1, \rho_2), (\rho_2, \rho_3) \in \Sigma_{\obA}$,
then $(\rho_1, \rho_3)\in \Sigma_{\obA}$.
\item[(iii)]$\Sigma_{\obA}$ is weakly closed.
\end{itemize}
(i) and (ii) are trivial. To show (iii), we investigate $\Sigma_{\obA}^c$.
Suppose $(\rho_1,\rho_2) \in \Sigma_{\obA}^c$.
This implies that there exists $x\in\Omega_{\obA}$ such that
$| \tr{\rho_1 \obA(x)} -\tr{\rho_2 \obA(x)} | >\epsilon >0$.
Then, for all $\rho_1'$ and $\rho_2'$
satisfying $|\tr{\rho_j \obA(x)} -\tr{\rho_j'\obA(x)}| > \epsilon/3$ for $j=1,2$,
  $| \tr{\rho_1' \obA(x)} -\tr{\rho_2' \obA(x)} | >\epsilon/3$.
Thus, there exists an open neighborhood of $(\rho_1, \rho_2)$
contained in $\Sigma_{\obA}^c$.

Based on this distinguishability, a relation between observables called state-distinction power, is introduced\cite{Busch1989,Heinonen2005}.

\begin{defn}
An observable $\obA$ has larger state distinction power than an observable $\obB$ if and only if any pair of states distinguishable by $\obB$ is also distinguishable by $\obA$.
That is, $\Sigma_{\obA} \subseteq \Sigma_{\obB}$ holds.
We denote this relation by $\obA \succsim_i \obB$.
\end{defn}
It is easy to see that the state distinction power is a preorder.
That is, the relation is reflexive ($\obA \precsim_i \obA$ for any $\obA$) and transitive ($\obA \precsim_i  \obB$ and $\obB \precsim_i \obC$ implies $\obA \precsim_i \obC$).
However,
this relation does not satisfy antisymmetric property and is not a partial order (see the argument presented below).

We emphasize that this relation is not total.
That is, there exists a pair of observables $\obA$ and $\obB$ that
 satisfies neither $\obA \precsim_i \obB$ nor $\obB \precsim_i \obA$.
For instance, in the qubit system ($\Hil=\complex^2$), the
sharp observables determined by the
Pauli matrices $\sigma_x$, $\sigma_y$ and $\sigma_z$ have no relation in terms of the state distinction power, e.g. $\sigma_x\not\precsim\sigma_y$ and $\sigma_x\not\succsim\sigma_y$.
Also,
in the qutrit system ($\Hil = \complex^3$), the sharp observables determined by the spin components
\begin{equation}
  S_x = \frac{1}{\sqrt{2}}
\begin{pmatrix}
0 & 1 & 0 \\
1 & 0 & 1 \\
0 & 1 & 0
\end{pmatrix},\quad
S_y = \frac{1}{\sqrt{2}i}
\begin{pmatrix}
  0 & 1  & 0 \\
 -1 & 0  & 1 \\
 0  & -1 & 0
\end{pmatrix},\quad
S_z =
\begin{pmatrix}
  1 & 0 & 0 \\
  0 & 0 & 0 \\
  0 & 0 & -1
\end{pmatrix}.
\end{equation}
are incomparable with each other.

To study the comparability, we introduce the
linear span of a POVM $\obA$ by
\begin{equation}
\mathcal{L}_0(\obA)
= \Set{X \in \Bnd{\Hil} | X = \sum_{x\in \Omega_0} c_x \obA(x), c_x \in \complex, \Omega_0 \subset \Omega \mbox{ with }
|\Omega_0| <\infty
}
\end{equation}
and its weak$ ^*$ closure, $\Lnr{\obA}$.
These sets are vector spaces and self-adjoint; they are operator systems.
\\
For a subset $M \subseteq \Bnd{\Hil}$,
a subset ${}^a M \coloneqq \set{T\in \Tr{\Hil}| \tr{T X} =0, {}^{\forall} X \in M}$
of $\Tr{\Hil}$
is called a preannihilator of $M$. On the other hand,
for a subset $F \subseteq \Tr{\Hil}$,
$F^a \coloneqq \set{X \in \Bnd{\Hil}| \Tr{TX}=0, {}^{\forall} T\in F}$
is called an annihilator of $F$.
For $M \subseteq \Bnd{\Hil}$, as ${}^a M$ is written as
${}^a M = \cap_{X \in M}\set{T \in \Tr{\Hil}|  \tr{T X}=0}$,
${}^a M$ is a closed subspace with respect to both
the weak and norm topology.  Similarly, for $F \subseteq \Tr{\Hil}$,
$F^a$ is a closed subspace with respect to weak$ ^*$ and norm topology.
One can show that for $M \subseteq \Bnd{\Hil}$,
$({}^a M)^a$ is the smallest weak$^ *$ closed subspace containing
$M$. Thus we find $({}^a\mathcal{L}_0(\obA))^a = \Lnr{\obA}$.

States $\rho_1$ and $\rho_2$ are not distinguishable if and only if
$\rho_1- \rho_2$ is in $ ^a \mathcal{L}_0(\obA)$.

The following theorem provides a simple criterion for a pair of observables to be comparable.
\begin{thm}\label{lem:statedistinction}
Let $\obA$ and $\obB$ be observables.
The following conditions are equivalent.
\begin{itemize}
\item[(i)] $\obA \precsim_i \obB$.
\item[(ii)] $\Lnr{\obA} \subseteq \Lnr{\obB}$.
\end{itemize}
\end{thm}
\begin{proof}
(ii) $\Rightarrow$ (i).
Suppose that states $\rho_1$ and $\rho_2$ are distinguishable by
$\obA$; there exists $x$ such that
$\tr{\rho_1 \obA(x)} - \tr{\rho_2 \obA(x)} =\epsilon >0$.
We consider a neighborhood of $\obA(x) \in \Bnd{\Hil}$,
$U_{\epsilon/3} \coloneqq \set{B \in \Bnd{\Hil} |
| \tr{ \rho_j \obA(x)} - \tr{\rho_j B}| <\epsilon/3, j=1,2}$.
Then there exists $B \in \mathcal{L}_0(\obB) \cap U_{\epsilon/3}$,
which satisfies
\begin{eqnarray*}
\epsilon = |\tr{\rho_1 \obA(x)} - \tr{\rho_2 \obA(x)}|
&\leq& |\tr{\rho_1 \obA(x)} - \tr{\rho_1 B}|
+ \tr{\rho_2 \obA(x)} - \tr{\rho_2 B}| +
|\tr{\rho_1 B} - \tr{\rho_2 B}| \\
&<& 2\epsilon /3 + |\tr{\rho_1 B} - \tr{\rho_2 B}|.
\end{eqnarray*}
Thus there exists $B = \sum_{y}^{finite} c_y \obB(y)$ $(c_y \in \mathbb{C})$
such that
\begin{eqnarray*}
|\tr{\rho_1 B} - \tr{\rho_2 B}| > \epsilon/3,
\end{eqnarray*}
which implies that there exists $y$ satisfying
$\tr{\rho_1 \obB(y)} \neq \tr{\rho_2 \obB(y)}$.

(i) $\Rightarrow$ (ii).
Assume that (ii) is not true.
There exists $x\in\Omega_\obA$ such that
$\obA(x) \notin \Lnr{\obB}$.
According to Proposition 14.13 in
a textbook~\cite{BrownPearcy}, there exists
$T\in \Tr{\Hil}$ such that $\tr{T B} =0$ for all $B \in \Lnr{\obB}$ and
$\tr{T\obA(x)} \neq 0$. Because
$\id \in \Lnr{\obB}$, $T\in \Tr{\Hil}$ satisfies
$\tr{T}=0$. By decomposing $T$ into self-adjoint and skew self-adjoint parts,
one can assume that $T$ is self-adjoint. Thus $T$ is written as
$T=T_+ -T_-$, where $T_{\pm}$ are positive operators satisfying
$\tr{T_+}=\tr{T_-} <\infty$. For sufficiently small $\epsilon>0$,
we can find a state $\rho_0$ such that $\rho_1 \coloneqq
\rho_0 + \epsilon T$ is also a state.
These states are not distinguishable by $\obB$ but are distinguishable by $\obA$.
Thus (i) does not hold.
\end{proof}

An observable $\obA$ is called informationally complete if $\tr{\rho_1 \obA(x)} = \tr{\rho_2 \obA(x)}$ for all $x$ implies $\rho_1 = \rho_2$. Thus by definition it is obvious that each informationally complete observable is a maximal element with respect to the preorder of the state distinction power.
In fact, the linear span of an informationally complete observable coincides with $\Bnd{\Hil}$~\cite{Heinosaari2011mathematical}.
\begin{example}
On the qubit system ($\Hil= \mathbb{C}^2$), a symmetric informationally complete (SIC) POVM\cite{Renes2004} $\obA_{\mathrm{SIC}}$ is given by
\begin{alignat*}{2}
\mathsf{A}_{\mathrm{SIC}}(0)&=\frac{1}{4}
\begin{pmatrix}
1+\frac{1}{\sqrt{3}} & \frac{1}{\sqrt{3}}-i\frac{1}{\sqrt{3}} \\
\frac{1}{\sqrt{3}}+i\frac{1}{\sqrt{3}} & 1-\frac{1}{\sqrt{3}}
\end{pmatrix},&
\mathsf{A}_{\mathrm{SIC}}(1)&=\frac{1}{4}
\begin{pmatrix}
1-\frac{1}{\sqrt{3}} & \frac{1}{\sqrt{3}}+i\frac{1}{\sqrt{3}} \\
\frac{1}{\sqrt{3}}-i\frac{1}{\sqrt{3}} & 1+\frac{1}{\sqrt{3}}
\end{pmatrix},\\
\mathsf{A}_{\mathrm{SIC}}(2)&=\frac{1}{4}
\begin{pmatrix}
1-\frac{1}{\sqrt{3}} & -\frac{1}{\sqrt{3}}-i\frac{1}{\sqrt{3}} \\
-\frac{1}{\sqrt{3}}+i\frac{1}{\sqrt{3}} & 1+\frac{1}{\sqrt{3}}
\end{pmatrix},&
\mathsf{A}_{\mathrm{SIC}}(3)&=\frac{1}{4}
\begin{pmatrix}
1+\frac{1}{\sqrt{3}} & -\frac{1}{\sqrt{3}}+i\frac{1}{\sqrt{3}} \\
-\frac{1}{\sqrt{3}}-i\frac{1}{\sqrt{3}} & 1-\frac{1}{\sqrt{3}}
\end{pmatrix}.
\end{alignat*}
It is easy to verify that their linear span coincides with $\Bnd{\Hil}$.
\end{example}
\begin{example}
A SIC-POVM on the qutrit system ($\Hil = \mathbb{C}^3$)
is given by
\begin{alignat*}{3}
\obA^{\mathrm{qutrit}}_{\mathrm{SIC}}(0)&=\frac{1}{6}
\begin{pmatrix}
1 & 1 & 0 \\
1 & 1 & 0 \\
0 & 0 & 0
\end{pmatrix},&\quad
\obA^{\mathrm{qutrit}}_{\mathrm{SIC}}(1)&=\frac{1}{6}
\begin{pmatrix}
1 & \omega^* & 0 \\
\omega & 1 & 0 \\
0 & 0 & 0
\end{pmatrix},&\quad
\obA^{\mathrm{qutrit}}_{\mathrm{SIC}}(2)&=\frac{1}{6}
\begin{pmatrix}
1 & \omega & 0 \\
\omega^* & 1 & 0 \\
0 & 0 & 0
\end{pmatrix}, \\
\obA^{\mathrm{qutrit}}_{\mathrm{SIC}}(3)&=\frac{1}{6}
\begin{pmatrix}
1 & 0 & 1 \\
0 & 0 & 0 \\
1 & 0 & 1
\end{pmatrix},&\quad
\obA^{\mathrm{qutrit}}_{\mathrm{SIC}}(4)&=\frac{1}{6}
\begin{pmatrix}
1 & 0 & \omega^* \\
0 & 0 & 0 \\
\omega & 0 & 1
\end{pmatrix},&\quad
\obA^{\mathrm{qutrit}}_{\mathrm{SIC}}(5)&=\frac{1}{6}
\begin{pmatrix}
1 & 0 & \omega \\
0 & 0 & 0 \\
\omega^* & 0 & 1
\end{pmatrix}, \\
\obA^{\mathrm{qutrit}}_{\mathrm{SIC}}(6)&=\frac{1}{6}
\begin{pmatrix}
0 & 0 & 0 \\
0 & 1 & 1 \\
0 & 1 & 1
\end{pmatrix},&\quad
\obA^{\mathrm{qutrit}}_{\mathrm{SIC}}(7)&=\frac{1}{6}
\begin{pmatrix}
0 & 0 & 0 \\
0 & 1 & \omega^* \\
0 & \omega & 1
\end{pmatrix},&\quad
\obA^{\mathrm{qutrit}}_{\mathrm{SIC}}(8)&=\frac{1}{6}
\begin{pmatrix}
0 & 0 & 0 \\
0 & 1 & \omega \\
0 & \omega^* & 1
\end{pmatrix}
\end{alignat*}
where $\omega=\exp\left(\frac{2\pi}{3}i\right)$.
It can be easily observed
that their linear span coincides with $\Bnd{\complex^3}$.
\end{example}

On the other hand, the minimum element of this preorder is a trivial observable which is written as $\obA(x)=p_x\id$ with $p_x \geq 0$ for each $x \in \Omega_x$.
Any pair of states $\rho_1$ and $\rho_2$ is not distinguishable by the
 trivial observable because $\tr{\rho_1 \obA(x)}= \tr{\rho_2 \obA(x)} = p_x$ holds.
It is easy to see that every minimum element is a trivial observable.


\subsection{Order structure of channels induced by the nondisturbing measurement}

Let us introduce the order structure of channel space in the light of the disturbance property.
A channel describes a state change from $\rho$ to $\Lambda(\rho)$.
Suppose that we examine the change by measuring an observable $\obB=\set{\obB(y)}_{y\in\Omega}$.
We can confirm that $\rho$ and $\Lambda(\rho)$ differ if the probability distributions $\set{\tr{\rho \obB(y)}}$ and $\set{\tr{\Lambda(\rho)\obB(y)}}$ are distinct.
On the other hand, if we select $\obB$ so that $\tr{\rho \obB(y)} = \tr{\Lambda(\rho)\obB(y)}$ holds for all $y \in \Omega$ and for any input state $\rho$,
this $\obB$ is useless to study the state change.
Motivated by this observation, we call
$\Lambda$ nondisturbing\cite{Heinosaari2010} for an observable $\obB$
(or $\obB$ is not disturbed by $\Lambda$)
if $\tr{\rho\obB(y)} = \tr{\Lambda(\rho)\obB(y)}$ holds for any input state $\rho$ and $y\in\Omega$
(see Fig~\ref{fig:NonDisturbingness}).

\begin{figure}
\centering
\includegraphics[width=0.7\linewidth]{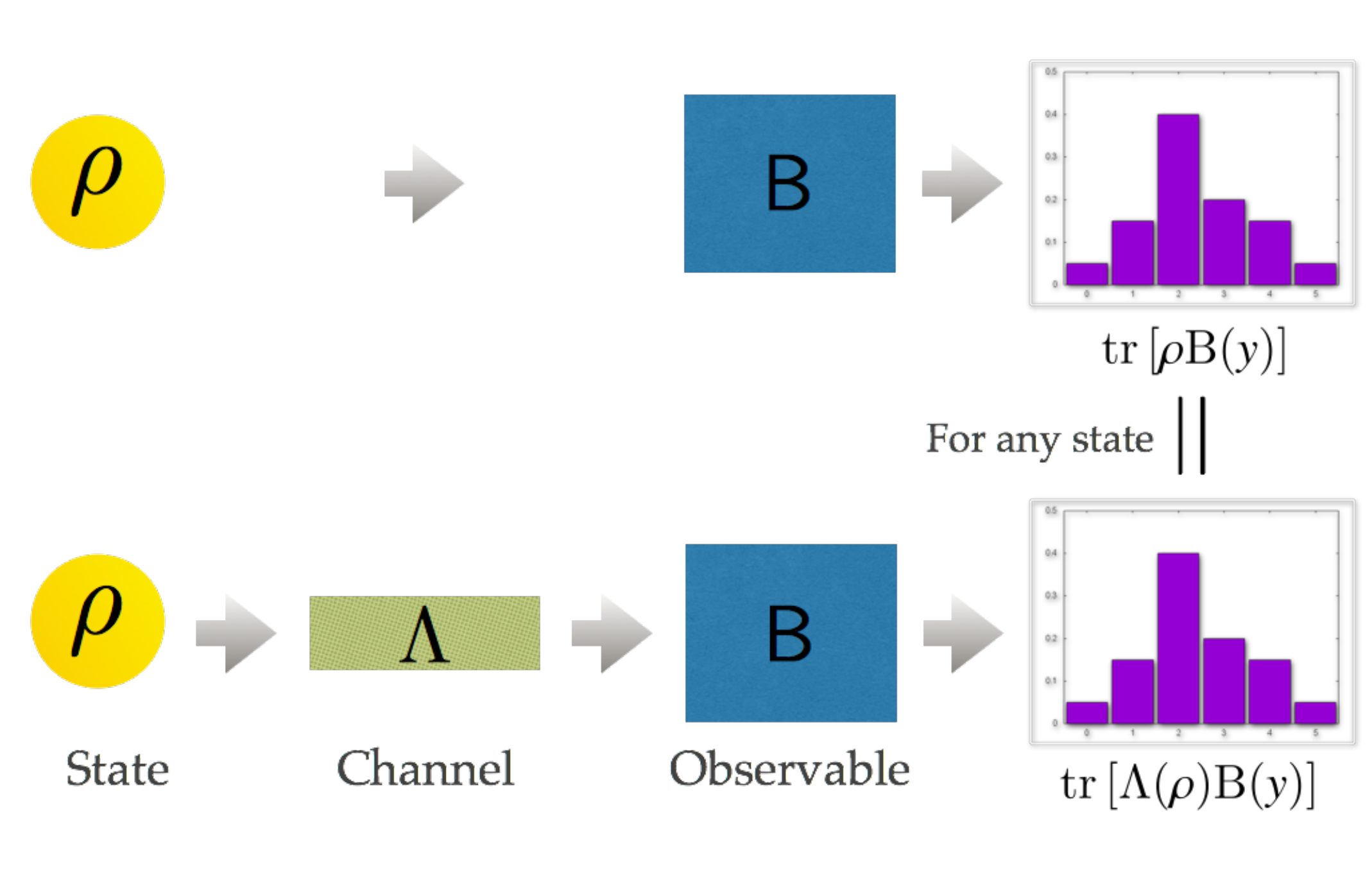}
\caption{Channel $\Lambda$ does not disturb the output from measuring an observable $\obB$}
\label{fig:NonDisturbingness}
\end{figure}

We introduce an order structure on the set of channels.

\begin{defn}
Let $\Lambda_1$ and $\Lambda_2 $ be channels.
If any observable not disturbed by $\Lambda_1$ is
also not disturbed by $\Lambda_2$,
we call $\Lambda_2$ less disturbing than $\Lambda_1$ and denote by $\Lambda_1 \precsim_f \Lambda_2$.
\end{defn}

This binary relation is a preorder.
The Heisenberg picture is useful to characterize this preorder.
Channel $\Lambda$ has a dual description (i.e., Heisenberg picture) $\Lambda^*: \Bnd{\Hil} \to \Bnd{\Hil}$ defined by $\tr{\Lambda(\rho) X} = \tr{\rho \Lambda^*(X)}$ for any $\rho$.
The set of fixed points of the map $\Lambda^*$ is defined by
\begin{equation}
\Fix{\Lambda^*} = \Set{X\in \Bnd{\Hil}| \Lambda^*(X)=X},
\end{equation}
which forms a weakly closed operator system in $\Bnd{\Hil}$.
In fact, for any $X \in \Fix{\Lambda^*}$, as $\Lambda^*(X^*)
= \Lambda^*(X)^* = X^*$ holds, $\Fix{\Lambda^*}$
is self-adjoint. In addition, as $\Fix{\Lambda}=
\ker(\mathbf{id} - \Lambda^*)$
holds, it is a weakly closed set.

\begin{thm}
Let $\Lambda_1$ and $\Lambda_2$ be the channels.
The following conditions are equivalent.
\begin{itemize}
\item[(i)] $\Lambda_1 \precsim_f \Lambda_2$.
\item[(ii)] $\Fix{\Lambda_1^*} \subset \Fix{\Lambda_2^*}$.
\end{itemize}
\end{thm}
\begin{proof}
(i) $\Rightarrow$ (ii).
Assume that there exists $X \in \Fix{\Lambda_1^*}
\cap \Fix{\Lambda_2^*}^c$. One can show that there exists
$Y \in \Fix{\Lambda_1^*} \cap \Fix{\Lambda_2^*}$
such that $\nul \leq Y \leq \id$.
In fact, any $X$ can be decomposed into
$X= \mathrm{Re} X + i\mathrm{Im} X$, where $\mathrm{Re} X \coloneqq \frac{X+X^*}{2}$
and $\mathrm{Im} X \coloneqq \frac{X-X^*}{2i}$. As $\Fix{\Lambda_1^*}$
is self-adjoint,
$\mathrm{Re} X, \mathrm{Im} X \in \Fix{\Lambda_1^*}$ holds.
In addition, $X\notin \Fix{\Lambda_2^*}$ implies that
at least one of them is not in $\Fix{\Lambda_2^*}$.
Let us assume that $\mathrm{Re} X \notin \Fix{\Lambda_2^*}$.
One can choose $a,b \in \mathbb{R}$ so that
$\nul \leq Y \coloneqq a \mathrm{Re} X + b \id \leq \id$.
It is easy to see $Y \in \Fix{\Lambda_1^*}
\cap \Fix{\Lambda_2^*}^c$.
We introduce a POVM $\set{Y, \id - Y}$.
This POVM is disturbed by $\Lambda_2$, but not
by $\Lambda_1$. It contradicts.
\\
(ii) $\Rightarrow$ (i): Trivial.
\end{proof}

Thus there exists the greatest element in the channel space.
The greatest element is an identity channel $\mathbf{id}$.
One can easily verify that $\Fix{\mathbf{id}^*}=\Bnd{\Hil}$ holds.
On the other hand, the least element does not exist.
One of the minimal elements is a channel whose output state does not depend on the input state.
That is, $\Lambda$ has a fixed $\rho_0$ such that $\Lambda(\rho) = \rho_0$ holds for any $\rho$.
Other examples of minimal elements are presented later.

In the aforementioned setting,
we can employ an arbitrary POVM
to confirm if an input state
$\rho$ and an output state $\Lambda(\rho)$ differ.
On the other hand, we can restrict the observables to be measured after channel $\Lambda$ to be sharp ones.
Therefore we introduce the following notion.
\begin{defn}
Let $\Lambda_1$ and $\Lambda_2$ be the channels.
If any PVM that is not disturbed by $\Lambda_1$
is also not disturbed by $\Lambda_2$, we denote
the relation by $\Lambda_1 \precsim_S \Lambda_2$.
\end{defn}
It is easy to see that
$\Lambda_1 \precsim_f \Lambda_2$ implies
$\Lambda_1 \precsim_S \Lambda_2$.
In addition, we obtain the following theorem.
\begin{thm}
For channels $\Lambda_1$ and $\Lambda_2$ the following
conditions are equivalent:
\begin{itemize}
\item[(i)] $\Lambda_1 \precsim_S \Lambda_2$
\item[(ii)] $\Fix{\Lambda_1^*} \cap P(\mathcal{H})
  \subseteq \Fix{\Lambda_2^*} \cap P(\mathcal{H})$,
where $P(\mathcal{H})$ is the set of all projection operators.
\end{itemize}
\end{thm}

\section{Information-disturbance relation}
\label{sec:infodist}
\subsection{Qualitative formulation}

In this section, we study the qualitative information-disturbance
relation based on the preorder structures.
Relationship between orders $\precsim_i$ and
$\precsim_S$ is investigated.
\begin{thm}
Let $\obA$ be an observable. There exists a channel $\Lambda^{\obA}$
satisfying that for any $\obA$-channel $\Lambda$ the following relation holds:
\begin{eqnarray*}
\Lambda \precsim_S \Lambda^{\obA}.
\end{eqnarray*}
That is, there exists a maximum channel with respect to the order
$\precsim_S$ in the set of $\obA$-channels.
\end{thm}
\begin{proof}
  Suppose $P \in \Fix{\Lambda^*} \cap P(\mathcal{H})$.
Let us introduce a Kraus representation of
$\Lambda$ by $\Lambda(T) =\sum_{x,j} K_{x,j} T K_{x,j}^*$
satisfying $\sum_j K_{x,j}^* K_{x,j}= \obA(x)$ for each $x$.
In accordance with the argument in Lindblad~\cite{Lindblad1999},
we consider
\begin{eqnarray*}
\Lambda^*(P) - \Lambda^*(P) P - P \Lambda^*(P) +P
= \sum_{j,x} [K_{j,x}, P]^* [K_{j,x}, P]=0.
\end{eqnarray*}
Thus we find $[K_{j,x}, P]=[K_{j,x}^*, P]= 0$ for all $j$.
Furthemore, we obtain
\begin{eqnarray*}
[\obA(x), P] = \sum_{j} [ K_{j,x}^* K_{j,x}, P]
= \sum_j [K_{j,x}^*, P] K_{j,x} + K_{j,x}^* [K_{j,x}, P] =0.
\end{eqnarray*}
Thus we can conclude that
\begin{eqnarray*}
\Fix{\Lambda^*} \cap P(\mathcal{H}) \subseteq \mathcal{L}(\obA)'
\cap P(\mathcal{H}).
\end{eqnarray*}
Let us introduce $\Lambda^{\obA}$ by
$\Lambda^{\obA}(T) = \sum_x \sqrt{\obA(x)} T \sqrt{\obA(x)}$,
a L\"{u}ders channel.
It is obvious that each $P \in \mathcal{L}(\obA)'
\cap P(\mathcal{H})$ satisfies $P \in \Fix{\Lambda^*} \cap P(\mathcal{H})$.
Thus the equality $\Fix{\Lambda^{\obA *}}\cap P(\mathcal{H})
= \mathcal{L}(\obA)' \cap P(\mathcal{H})$ holds.
\end{proof}
In contrast to the post-processing order version,
the set of $\obA$-channels is not an ideal with respect to
$\precsim_S$ as demonstrated by the following example.
\begin{example}
Let us consider a qubit system and a channel $\Lambda(T)
= \sigma_z T \sigma_z$. $\Fix{\Lambda}$ is spanned by $\id$
and $\sigma_z$, and is thus nontrivial. However, this channel is
not compatible with any nontrivial observable including $\sigma_z$.
\end{example}
The following theorem which can be referred to as
a qualitative information-disturbance relation, is easy to obtain.
\begin{thm}
Let $\obA$ and $\obB$ be observables satisfying $\obA \precsim_i \obB$.
Suppose that $\Lambda$ is a channel compatible with $\obB$.
Then there exists a channel $\Gamma$ compatible with $\obA$
such that
$\Lambda \precsim_S \Gamma$ holds.
\end{thm}
\begin{proof}
Consider $\Gamma = \Lambda^{\obA}$.
\end{proof}
Suppose that the $\obA$-channel $\Lambda$ has a
nontrivial projection $P \in \Fix{\Lambda^*} \cap P(\mathcal{H})$.
The following argument yields an explicit construction of a pair of indistinguishable states of $\obA$.

Now there exist unit vectors $|\psi_1\rangle$ and $|\psi_2\rangle$
satisfying $P| \psi_1\rangle =|\psi_1\rangle$ and $P|\psi_2 \rangle =0$.
A pair of states $\Lambda( |\psi_1\rangle \langle \psi_1|) $ and
$\Lambda( |\psi_2 \rangle \langle \psi_2|)$ is perfectly distinguishable by
a PVM $\set{P, \id -P}$. In fact,
$\tr{\Lambda(|\psi_1\rangle \langle \psi_1|)P}=1$
and $\tr{\Lambda(|\psi_2 \rangle \langle \psi_2)P}=0$ hold.
According to the paper\cite{Heinosaari2013},
any $\obA$-channel is written in the following form:
\begin{eqnarray*}
\Lambda(T) = \mathcal{E} \left(\sum_x \hat{\obA}(x)V  T V^*\hat{\obA}(x)\right),
\end{eqnarray*}
where $(\set{\hat{\obA}(x)}_x, V)$ is a Naimark extension of
the POVM $\obA$ and $\mathcal{E}$ is a channel.
That is, $V\colon \Hil \to \mathcal{K}$ is
an isometry to a larger Hilbert space $\mathcal{K}$ and
$\set{\hat{\obA}(x)}_x$ is a PVM on $\mathcal{K}$ satisfying
$V^* \hat{\obA}(x) V= \obA(x)$ for each $x\in \Omega_{\obA}$,
and $\mathcal{E} \colon\mathcal{T}(\mathcal{K}) \to
\mathcal{T}(\mathcal{H})$ is a channel.
Thus one can conclude that
a pair of subnormalized states
$\mathcal{E}(\hat{\obA}(x)V |\psi_1\rangle \langle \psi_1 |
V^* \hat{\obA}(x) )$ and
$\mathcal{E} (\hat{\obA}(x)V|\psi_2 \rangle \langle \psi_2|V^*
\hat{\obA}(x))$ is perfectly distinguishable.
Because any channel does not increase the distinguishability,
we observe that a pair of subnormalized states
$\hat{\obA}(x)V |\psi_1\rangle \langle \psi_1| V^*\hat{\obA}(x)$
and $\hat{\obA}(x)V |\psi_2 \rangle \langle \psi_2|V^* \hat{\obA}(x)$
is perfectly distinguishable. This demonstrates that
\begin{eqnarray*}
\langle \psi_1| \obA(x) |\psi_2\rangle
=0
\end{eqnarray*}
 holds for each $x$.
Therefore we can conclude that for
$|\psi\rangle = c_1 |\psi_1 \rangle + c_2 |\psi_2\rangle$
with $|c_1|^2 + |c_2|^2=1$, a pair of states
$|\psi\rangle \langle \psi|$ and $|c_1|^2 |\psi_1\rangle
\langle \psi_1| + |c_2|^2 |\psi_2\rangle \langle \psi_2 |$
is not dinstinguishable by $\obA$.
In particular, the following theorem holds.
\begin{thm}
Let $\obA$ be an informationally complete observable
and $\Lambda$ be an $\obA$-channel.
Then it holds that for any channel $\Gamma$,
\begin{eqnarray*}
\Lambda \precsim_S \Gamma.
\end{eqnarray*}
That is, $\Lambda$ is a minimum channel.
\end{thm}

Let us remark that the converse is not true.
The following example shows a minimal channel that is not compatible with any informationally complete observable.

\begin{example}
\label{Ex3}
Let us suppose that $\dim \Hil =N\geq 3$, which has a standard basis $\set{\ket{n} }_{n=0}^{N-1}$.
We define unitary matrices $U= \sum_{n} e^{i \frac{2\pi n}{N}}|n \rangle \langle n|$ and $V= \sum_n |n+1\rangle \langle n|$, where $|N\rangle $ is identified with $|0\rangle$. It is easy to see that $\set{U, V}' = \complex \id$ holds.
We consider a channel $\Lambda(\rho) = \frac{1}{2}U\rho U^* + \frac{1}{2} V\rho V^*$.
This has a faithful fixed point $\frac{\id}{N}$.
Based on the previous discussion, we conclude that this channel is a minimal element in the channel space.
On the other hand, every POVM element of an observable compatible with this channel is spanned by $\set{\id, U^*V, V^*U}$.
Thus for any $\obA$ compatible with $\Lambda$, $\dim \mathcal{L}(\obA) \leq 3$ holds.
We can conclude that no informationally complete observable is compatible.
\end{example}

It is unclear whether there exists any simple relation between
the orders $\precsim_i$ and $\precsim_S$.
In the following section, we investigate their relation using a
 quantitative formulation.

\subsection{Quantitative inequality}\label{sec:quantitative}

Hereafter, we assume $\Hil$ to be finite dimensional.
To systematically investigate the relation, we introduce a subclass of
channels.
A channel is said to satisfy {\em property F}
 if and only if it has a faithful invariant state.
That is, $\Lambda$ with property F has a state $\rho_0$ whose kernel is $\set{0}$ such that $\Lambda (\rho_0)= \rho_0$.
Fixed point $\Fix{\Lambda^*}$ of this channel forms a subalgebra of $\Bnd{\Hil}$ \cite{Lindblad1999}.
In fact a Kraus decomposition $\Lambda^*(\cdot)=\sum_{j,x} K_{j,x}^* \cdot K_{j,x}$ allows its fixed point $\Fix{\Lambda^*}$ to be
\begin{equation}
\Fix{\Lambda^*}=\Set{K_{j,x},K_{j,x}^*}'_{j,x}=\Set{X\in\Bnd{\Hil}|[X,K_{j,x}]=[X,K_{j,x}^*]=0 \mbox{ for all }j,x}
\end{equation}
It is used to obtain
\begin{eqnarray}
\Fix{\Lambda^*} \subseteq \mathcal{L}(\obA)'.
\label{lemma3}
\end{eqnarray}

It should be noted that $\Lnr{\obA}$ and $\Fix{\Lambda^*}$ are vector spaces.
A natural quantity used to measure the size of a complex vector space is its dimension.
We obtain the following theorem.
\begin{thm}
\label{thm:quantitative}
For an observable $\obA$ and a channel $\Lambda^*$ that are compatible,
it holds that
\begin{equation}
\label{eq:inequality}
\dim \Lnr{\obA} + \dim \Fix{\Lambda^*} \leq (\dim\Hil)^2 + 1.
\end{equation}
Furthermore, equality is attained if and only if an observable $\obA$ is informationally complete or a channel $\Lambda^*$ is the identity channel.
\end{thm}

\begin{proof}
We first assume that channel $\Lambda$ satisfies
property F.
We prove $\Delta\coloneqq(\mathrm{RHS})-(\mathrm{LHS})\geq 0$.
\begin{align*}
\Delta
&= (\dim \Hil)^2 + 1 - \dim \Lnr{\obA} - \dim \Fix{\Lambda^*} \\
&\geq (\dim \Hil)^2 + 1 - \dim \Lnr{\obA} - \dim \Lnr{\obA}' \quad\mbox{((\ref{lemma3}) is used)}\\
&\geq (\dim \Hil)^2 + 1 - \dim \Lnr{\obA}'' - \dim \Lnr{\obA}'.
\end{align*}

As $\set{\obA(x)}'$ is a finite von Neumann algebra,
it can be represented as
\begin{equation*}
\set{\obA(x)}' = \bigoplus_{n=1,2,\ldots,N} \Bnd{\Hil_n}\otimes\id_{\HilK_n},
\end{equation*}
where $\Hil_n\otimes\HilK_n$ are orthogonal subspaces satisfying $\bigoplus\Hil_n\otimes\HilK_n=\Hil$\cite{Takesaki2002}.
Therefore $\set{\obA(x)}'' = \bigoplus_{n=1,2,\ldots,N} \id_{\Hil_n}\otimes\Bnd{\HilK_n}$ and $\set{\obA(x)}''\cap\set{\obA(x)}'=\bigoplus_{n}\complex P_n$ hold where $P_n$ is a projection on $\Hil_n\otimes\HilK_n$.
We obtain
\begin{align*}
\Delta
  &\geq  (\dim \Hil)^2 + 1 - \dim \bigoplus_{n=1}^N\id_{\Hil_n}\otimes\Bnd{\HilK_n} - \dim \bigoplus_{n=1}^N\Bnd{\Hil_n}\otimes\id_{\HilK_n}\\
&=(\dim\Hil)^2+1-\sum_{n=1}^N(\dim\HilK_n)^2-\sum_{n=1}^N(\dim\Hil_n)^2\\
  &=\left(\sum_{n=1}^N(\dim\Hil_n)\times(\dim\HilK_n)\right)^2+1-\sum_{n=1}^N((\dim\Hil_n)^2+(\dim\HilK_n)^2)\quad\mbox{($\because\Hil=\oplus_{n=1}^N\Hil_n\otimes\HilK_n$)}\\
&\geq \sum_{n=1}^N(\dim\Hil_n)^2(\dim\HilK_n)^2+N(N-1)+1-\sum_{n=1}^N((\dim\Hil_n)^2+(\dim\HilK_n)^2).
\end{align*}
Since an equality $N(N-1)+1=N+(N-1)^2$ holds,
\begin{align}
\Delta
&\geq \sum_{n=1}^N(\dim\Hil_n)^2(\dim\HilK_n)^2+N+(N-1)^2-\sum_{n=1}^N((\dim\Hil_n)^2+(\dim\HilK_n)^2)\notag\\
&= (N-1)^2 + \sum_{n=1}^N ((\dim\Hil_n)^2-1)((\dim\HilK_n)^2-1)\notag\\
&\geq (N-1)^2\label{eq:inequalitydim}\\
&\geq 0.\label{eq:inequalityN-1}
\end{align}

Further, we discuss a case in which
$\Lambda$ does not satisfy property F.
We use a support projection operator $P_\Lambda$ introduced by Lindblad\cite{Lindblad1999}.
This projection operator $P_\Lambda$ is defined by the smallest projection satisfying $\tr{\rho P_\Lambda}=1$ for all $\rho\in\St{\Hil}\cap\Fix{\Lambda}$.
We assume $\dim P_\Lambda \Hil = n < d =\dim \Hil$.
It can be observed that map $\Lambda (P_{\Lambda} \cdot P_{\Lambda})$ is well-defined as a channel on $\St{P_{\Lambda}\Hil}$.
As demonstrated by Lindblad, there exists an invariant state $\rho_0$ for $\Lambda$ whose support projection coincides with $P_{\Lambda}\Hil$.
By using this state $\rho_0$, we observe that $\tr{(\id - P_{\Lambda}) \Lambda(P_{\Lambda} \rho_0 P_{\Lambda})} = \tr{(\id - P_{\Lambda}) \Lambda(\rho_0)}=0$.
As any state $\sigma$ on $P_{\Lambda} \Hil$ is dominated by $\rho_0$, we conclude that
$\Lambda(P_{\Lambda} \sigma P_{\Lambda}) = P_{\Lambda} \Lambda(P_{\Lambda} \sigma P_{\Lambda})P_{\Lambda}$ holds.

Let $\obA$ be an observable compatible with $\Lambda$.
Then an observable $P_{\Lambda} \obA P_{\Lambda}$ on a subspace $P_{\Lambda}\Hil$ is compatible with a channel $\Lambda(P_{\Lambda} \cdot P_{\Lambda})$.
We apply the first part of the present proof to $\Lambda(P_{\Lambda}\cdot
P_{\Lambda})$ to obtain
\begin{align}
\dim P_{\Lambda} \Lnr{\obA}P_{\Lambda}
+ \dim \Fix{\Lambda(P_{\Lambda} \cdot P_{\Lambda})}
&=
\dim P_{\Lambda} \Lnr{\obA}P_{\Lambda}
+ \dim \Fix{\Lambda}\notag\\
&\leq n^2 +1.
\label{siki1}
\end{align}
As $\Lnr{\obA}$ is decomposed as $\Lnr{\obA}= P_{\Lambda} \Lnr{\obA} P_{\Lambda} + P_{\Lambda} \Lnr{\obA} (\id - P_{\Lambda}) +(\id - P_{\Lambda} )\Lnr{\obA} P_{\Lambda} +(\id - P_{\Lambda}) \Lnr{\obA} (\id - P_{\Lambda})$,
the dimension of $\Lnr{\obA}$ is bounded by $\dim P_{\Lambda} \Lnr{\obA} P_{\Lambda} + (d^2 -n^2)$.
Combining this bound with (\ref{siki1}) ends the proof of the first half
of the claim.

It is easy to show that equality is attained if $\obA$ is an informationally complete or $\Lambda$ is the identity channel.
We show the opposite implication.
In the support projection $P_\Lambda$, as equality of inequality (\ref{eq:inequalityN-1}) holds, $N=1$ is required.
To obtain equality (\ref{eq:inequalitydim}),
$\dim\Hil_1=1$ or $\dim\HilK_1=1$ must hold.
This implies that $\dim\HilK_1=\dim\Hil$ or $\dim\Hil_1=\dim\Hil$ holds; thus the observable $\obA$ is an informationally complete or
the channel $\Lambda$ is the identity channel.
\end{proof}

The following is obtained immediately.
\begin{cor}
Let $\obA$ be an informationally compete observable.
Then any channel $\Lambda$ compatible with $\obA$
is minimal with respect to the order $\precsim_i$;
$\Fix{\Lambda^*} = \mathbb{C}\id$ holds.
\end{cor}

\begin{figure}
  \centering
  \subfloat[Qubit system]{\label{fig:quantitative2}
  \includegraphics[width=.48\linewidth]{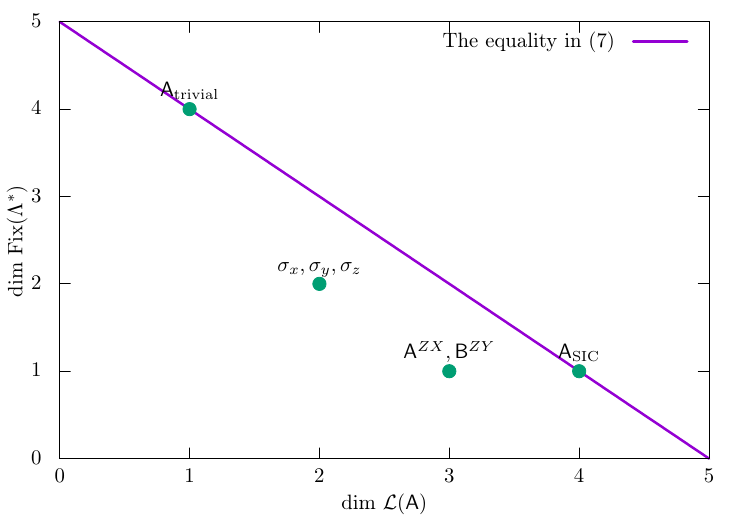}
  }
  \subfloat[Qutrit system]{\label{fig:quantitative3}
 \includegraphics[width=.48\linewidth]{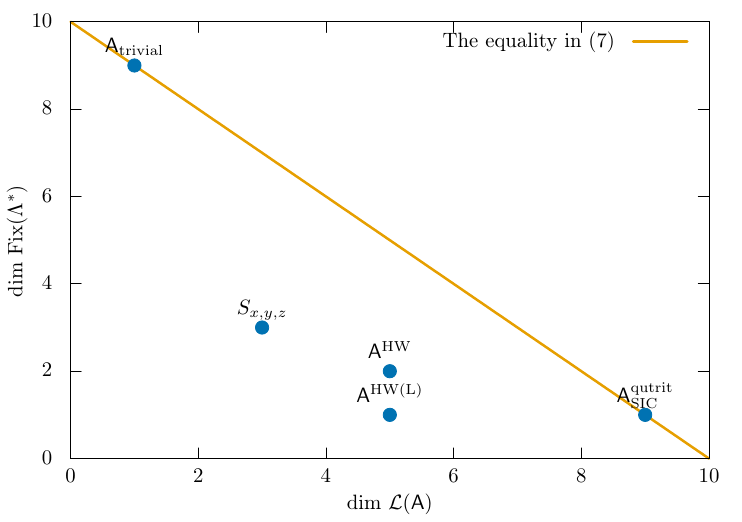}
  }
  \caption{Relation between the quantitative state distinction power
    $\dim\Lnr{\obA}$ and the non-disturbingness $\dim\Fix{\Lambda^*}$.}
\label{fig:quantitative}
\end{figure}

We plot $\dim \Lnr{\obA}$ and $\dim \Fix{\Lambda}$ for $\obA^\mathrm{HW}$ defined by the following Kraus
operators and for the L\"{u}ders channel in Figure~\ref{fig:quantitative}.
$\obA^\mathrm{HW}$ represents a channel defined by Kraus operators
\begin{alignat*}{3}
  K_1 &= \frac{1}{2}
  \begin{pmatrix}
    \sqrt{2} & 0 & -1 \\
    0        & 0 & 0 \\
    0       & 0 & 0
  \end{pmatrix},&\qquad
  K_2 &= \frac{1}{10}
  \begin{pmatrix}
    0 & 0 & 0 \\
    0 & -\sqrt{10} & 2\sqrt{10} \\
    0 & 0 & 0
  \end{pmatrix},&\qquad
  K_3 &= \frac{1}{2}
  \begin{pmatrix}
    0 & 0 & 0 \\
    0 & \sqrt{2} & 0 \\
    0 & 0 & 0
  \end{pmatrix},\\
  K_4 &= \frac{1}{10}
  \begin{pmatrix}
    0 & 0 & 0 \\
    0 & 2\sqrt{10} & \sqrt{10} \\
    0 & 0 & 0
  \end{pmatrix},&\qquad
  K_5 &= \frac{1}{2}
  \begin{pmatrix}
    \sqrt{2} & 0 & 1 \\
    0 & 0 & 0 \\
    0 & 0 & 0
  \end{pmatrix}
\end{alignat*}
and an observable defined by  $\obA^\mathrm{HW}(i) = K_i^*K_i$($i=1,\cdots,5$).
This channel does not satisfy property F\cite{Heinosaari2010}.
$\obA^{\mathrm{HW(L)}}$ is L\"{u}ders channel of $\obA^{\mathrm{HW}}$.
The fixed point set of $\obA^{\mathrm{HW(L)}}$ is spanned by
$\diag(1,1,1)$, whose dimension is $1$.

\subsection{Sequential measurements}

In a previous paper\cite{Heinosaari2015}, the problem of sequential measurement was discussed concerning the post-processing order structure.
It was shown that for each observable $\obA$ there exists
a measurement process which does not spoil the joint measurability.
In particular, after such a measurement process whose channel is
called a universal channel, one can extract information
on an arbitrary maximum observable which is more informative than
$\obA$ by measuring an observable.

Inspired by this observation, we consider the following problem.
Suppose that we first measure an observable $\obA$ and then measure an observable $\obB$ which was not disturbed by the first measurement.
This sequential measurement provides a joint measurement of $\obA$ and $\obB$.
We examine whether it is
possible to measure $\obA$ so that this joint measurement provides  maximal state distinction power.
In other words,
we study if the joint observable $\obA$ and $\obB$ can be informationally complete.
A trivial $\obA$ provides a trivial example.
As the measurement of $\obA$ does not cause any disturbance,
one can subsequently measure any informationally complete observable $\obB$.
A less trivial example is as follows.
We assume that $\Hil$ has a tensor product structure as $\Hil=\Hil_1\otimes\Hil_2$.
Suppose that $\obA$ is an informationally complete observable on $\Hil_1$, and $\obB$ is another informationally complete observable on $\Hil_2$.
Then $\obA\otimes\obB$ becomes an informationally complete observable on $\Hil$.
We can construct a channel compatible with $\obA$ acting only on $\Hil_1$,
which naturally does not disturb $\obB$.
This preservation of state distinction power does not hold in general.

\begin{thm}\label{thm:sequential-not-ic}
Let $\obA$ be an observable such that $\set{\obA(x)}''\cap\set{\obA(x)}'$ is not trivial.
That is, this algebra is strictly larger than $\complex\id$.
We measure $\obA$ by a compatible channel $\Lambda$.
Let $\obB$ be an observable satisfying $\obB(y)\in\Fix{\Lambda^*}$ for all $y\in\Omega_\obB$.
The subsequent measurement of $\obB$ provides a joint measurement of $\obA$ and $\obB$.
This joint measurement cannot be informationally complete.
\end{thm}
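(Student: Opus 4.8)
The plan is to compute the joint observable explicitly, recast informational completeness as a statement about linear spans, and then trap that span inside the commutant of a nontrivial central projection of $\set{\obA(x)}''$, which is a proper subspace of $\Bnd{\Hil}$.

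First I would fix a Kraus decomposition realizing the compatibility of $\obA$ and $\Lambda$: operators $\set{K_{x,i}}$ with $\obA(x)=\sum_i K_{x,i}^*K_{x,i}$ and $\Lambda(\rho)=\sum_{x,i}K_{x,i}\rho K_{x,i}^*$. The sequential scheme (measure $\obA$ through this instrument, then $\obB$) has joint POVM
\begin{equation*}
\obG(x,y)=\sum_i K_{x,i}^*\,\obB(y)\,K_{x,i}.
\end{equation*}
The marginals behave correctly: $\sum_y\obG(x,y)=\obA(x)$, and, since $\obB(y)\in\Fix{\Lambda^*}$, $\sum_x\obG(x,y)=\Lambda^*(\obB(y))=\obB(y)$. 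Recalling that an observable is informationally complete iff its linear span is all of $\Bnd{\Hil}$, it suffices to produce a proper subspace containing $\Lnr{\obG}$.

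By hypothesis $\set{\obA(x)}''\cap\set{\obA(x)}'$ strictly contains $\complex\id$, so I can choose a projection $P$ with $P\neq\nul,\id$ in this center. Because $P$ commutes with every $\obA(x)$, the operator $\sum_iK_{x,i}^*K_{x,i}=\obA(x)$ is block diagonal for the splitting $\Hil=P\Hil\oplus(\id-P)\Hil$; equivalently $\sum_i(K_{x,i}P)^*\bigl(K_{x,i}(\id-P)\bigr)=P\obA(x)(\id-P)=\nul$ for each $x$. The goal is to upgrade this to the off-diagonal block of the joint POVM,
\begin{equation*}
P\,\obG(x,y)\,(\id-P)=\sum_i\bigl(K_{x,i}P\bigr)^*\obB(y)\bigl(K_{x,i}(\id-P)\bigr)=\nul ,
\end{equation*}
and likewise its adjoint. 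Together these give $[\obG(x,y),P]=\nul$ for all $x,y$, hence $\Lnr{\obG}\subset\set{P}'\subsetneq\Bnd{\Hil}$, so $\obG$ cannot be informationally complete.

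The substantive point, and the step I expect to be hardest, is the vanishing of these off-diagonal blocks for each \emph{individual} outcome $x$. Summed over $x$ it is immediate: $\sum_x P\obG(x,y)(\id-P)=P\obB(y)(\id-P)$, which already forces me to prove as a preliminary that a fixed point $\obB(y)\in\Fix{\Lambda^*}$ must commute with the central projection $P$. Both facts should flow from the fixed-point equation $\Lambda^*(\obB(y))=\obB(y)$ together with the positivity $\obB(y)\geq\nul$: writing $\obB(y)=\obB(y)^{1/2}\obB(y)^{1/2}$ and applying the Schwarz inequality to the completely positive maps $X\mapsto\sum_iK_{x,i}^*XK_{x,i}$, the equality $\sum_i(K_{x,i}P)^*(K_{x,i}(\id-P))=\nul$ is a boundary case that rigidly propagates to the version with $\obB(y)$ inserted. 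Concretely this is the assertion that the fixed points of the unital channel $\Lambda^*$ lie in its multiplicative domain and respect the central decomposition of $\set{\obA(x)}''$; a direct Cauchy--Schwarz computation in the qubit case confirms that no compatible channel can fix an operator failing to commute with $P$ once $\obA$ genuinely separates the two blocks. As an independent check one may observe that $\obG$ is compatible with the composed channel $\Lambda^L_{\obB}\circ\Lambda$ and invoke Theorem~\ref{thm:infocomp-trivial}: exhibiting any nontrivial element of $\Fix{(\Lambda^L_{\obB}\circ\Lambda)^*}$ immediately forces $\obG$ to be non--informationally complete.
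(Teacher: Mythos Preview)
Your strategy—trap $\Lnr{\obG}$ inside $\{P\}'$ for a nontrivial central projection $P$—is essentially the paper's strategy, only phrased through the span criterion for informational completeness rather than by exhibiting indistinguishable states. The paper builds a unitary $U=\bigoplus_n e^{i\theta_n}P_n$ from the central projections of $\{\obA(x)\}''$ and shows that $U$ commutes with every $\obG(x,y)$, so $\ket{\phi_0}$ and $U\ket{\phi_0}$ give identical joint statistics. Since $[\obG(x,y),P]=0$ for each central $P_n$ is equivalent to $[\obG(x,y),U]=0$, the two conclusions are the same.

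Where you and the paper diverge is precisely the step you flag as hardest. You attempt to force $P\,\obG(x,y)(\id-P)=0$ by a block/Schwarz argument, and your ``rigid propagation'' remains a hope rather than a proof: the vanishing of $\sum_i(K_{x,i}P)^*(K_{x,i}(\id-P))$ does not by itself kill the version with an arbitrary positive $\obB(y)$ inserted. The paper sidesteps this entirely by first establishing the identity
\[
\obG(x,y)=\sum_i K_{x,i}^*\,\obB(y)\,K_{x,i}=\obA(x)\,\obB(y),
\]
using that $\obB(y)\in\Fix{\Lambda^*}$ commutes with every individual Kraus operator $K_{x,i}$. Once $\obG(x,y)=\obA(x)\obB(y)$, commutation with anything central is immediate, since $\obA(x)\in\{\obA(x)\}''$ and $\obB(y)\in\Fix{\Lambda^*}\subset\{\obA(x)\}'$ both commute with central elements. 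This is exactly the multiplicative-domain fact you gesture at, but deployed as a one-line computation rather than as a separate structural lemma. Note that this step—and hence both the paper's argument and yours—tacitly relies on property~F, so that $\Fix{\Lambda^*}=\{K_{x,i},K_{x,i}^*\}'$ (Lemma~\ref{lem:fixcom} and the surrounding discussion); without it neither the identity $\obG=\obA\obB$ nor your block-vanishing is available in general.
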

\begin{proof}
Let $\Lambda$ be a channel compatible with $\obA$.
We denote its Kraus representation $\Lambda(\cdot)=\sum_x\sum_i K_{x,i}^* \cdot K_{x,i}$.
$\Fix{\Lambda^*}$ is contained in $\set{\obA(x)}'$.

As $\set{\obA(x)}'$ is a finite von Neumann algebra,
it can be represented as
\begin{equation*}
\set{\obA(x)}' = \bigoplus_{n=1,2,\ldots,N} \Bnd{\Hil_n}\otimes\id_{\HilK_n},
\end{equation*}
where $\Hil_n\otimes\HilK_n$ are orthogonal subspaces satisfying $\bigoplus\Hil_n\otimes\HilK_n=\Hil$\cite{Takesaki2002}.
Therefore $\set{\obA(x)}'' = \bigoplus_{n=1,2,\ldots,N} \id_{\Hil_n}\otimes\Bnd{\HilK_n}$ and $\set{\obA(x)}''\cap\set{\obA(x)}'=\bigoplus_{n}\complex P_n$ hold where $P_n$ is a projection on $\Hil_n\otimes\HilK_n$.
The assumption of its nontriviality is equivalent to the condition $N\geq 2$.

Let us construct states that cannot be distinguished by the joint observable $\obA$ and $\obB$.
We define a unitary operator $U$ by $U=\bigoplus_n e^{i\theta_n}P_n$ satisfying $\theta_n\neq\theta_m \pmod{2\pi}$ for all $n\neq m$.
The states we are now considering are a some arbitrary state $\ket{\phi_0}$ and $\ket{\phi_1}=U\ket{\phi_0}$.
These states cannot be distinguished by the joint measurement.
In fact,
\begin{align*}
\Pr(\obA=x,\obB=y|\phi_0)
  &=\Braket{\phi_0|\sum_i K_{x,i}^* \obB(y)K_{x,i}\phi_0} \\
  &=\Braket{\phi_0|\obA(x)\obB(y) \phi_0} \\
\Pr(\obA=x,\obB=y|\phi_1)
  &=\Braket{\phi_0|U^*\sum_i K_{x,i}^* \obB(y)K_{x,i}U \phi_0} \\
  &=\Braket{\phi_0|U^*\obA(x) \obB(y)U\phi_0} \\
&=\Braket{\phi_0|\obA(x)\obB(y)\phi_0}.
\end{align*}
Thus they correspond to each other.
However, $\ket{\phi_0}$ and $\ket{\phi_1}$ are not identical in general.
\end{proof}

\section{discussion}
\label{sec:discussion}
In this paper, we studied the compatibility between observables and
channels.
We investigated the state distinction power which characterizes the quality of the observables. This notion introduces a preorder structure completely characterized by the operator system $\mathcal{L}(\obA)$.
In the channel space, based upon a discussion on the nondisturbing channel, we introduced two preorders $\precsim_f$ and $\precsim_S$.
The order structures are completely characterized by
the fixed point set $\Fix{\Lambda^*}$ of each channel $\Lambda$.
We further investigated the relations between the
orders $\Lambda_i$ and
$\Lambda_S$. It was shown that
the set of $\obA$-channels contains a maximum
channel with respect to $\Lambda_S$.
In contrast to the result for the post-processing order structure,
the set of $\obA$-channels is not an ideal in general.
In addition, a quantitative relation between $\Lambda_i$ and
$\Lambda_S$ was obtained.

 It is interesting to compare our results with those obtained for the post-processing order structure.
 While both the results exhibit
 their own information-disturbance relations,
 they are different from each other in several aspects.
 For instance, while the latter offers a complete characterization of
 the $\obA$-channels as a principal ideal, the former does not.
 In addition, their maximum channels in $\obA$-channels
 do not coincide in general.

 Extending the present analysis to channel-channel compatibility
 is a future problem, on which we hope to report elsewhere.

\section*{Acknowledgement}
IH acknowledges support by Grant-in-Aid for JSPS Research Fellow (JP18J10310).
TM acknowledges financial support from JSPS
(KAKENHI Grant Number 15K04998).

\bibliography{state_distinction-disturbance}

\begin{thebibliography}{21}%
\makeatletter
\providecommand \@ifxundefined [1]{%
 \@ifx{#1\undefined}
}%
\providecommand \@ifnum [1]{%
 \ifnum #1\expandafter \@firstoftwo
 \else \expandafter \@secondoftwo
 \fi
}%
\providecommand \@ifx [1]{%
 \ifx #1\expandafter \@firstoftwo
 \else \expandafter \@secondoftwo
 \fi
}%
\providecommand \natexlab [1]{#1}%
\providecommand \enquote  [1]{``#1''}%
\providecommand \bibnamefont  [1]{#1}%
\providecommand \bibfnamefont [1]{#1}%
\providecommand \citenamefont [1]{#1}%
\providecommand \href@noop [0]{\@secondoftwo}%
\providecommand \href [0]{\begingroup \@sanitize@url \@href}%
\providecommand \@href[1]{\@@startlink{#1}\@@href}%
\providecommand \@@href[1]{\endgroup#1\@@endlink}%
\providecommand \@sanitize@url [0]{\catcode `\\12\catcode `\$12\catcode
  `\&12\catcode `\#12\catcode `\^12\catcode `\_12\catcode `\%12\relax}%
\providecommand \@@startlink[1]{}%
\providecommand \@@endlink[0]{}%
\providecommand \url  [0]{\begingroup\@sanitize@url \@url }%
\providecommand \@url [1]{\endgroup\@href {#1}{\urlprefix }}%
\providecommand \urlprefix  [0]{URL }%
\providecommand \Eprint [0]{\href }%
\providecommand \doibase [0]{http://dx.doi.org/}%
\providecommand \selectlanguage [0]{\@gobble}%
\providecommand \bibinfo  [0]{\@secondoftwo}%
\providecommand \bibfield  [0]{\@secondoftwo}%
\providecommand \translation [1]{[#1]}%
\providecommand \BibitemOpen [0]{}%
\providecommand \bibitemStop [0]{}%
\providecommand \bibitemNoStop [0]{.\EOS\space}%
\providecommand \EOS [0]{\spacefactor3000\relax}%
\providecommand \BibitemShut  [1]{\csname bibitem#1\endcsname}%
\let\auto@bib@innerbib\@empty
\bibitem [{\citenamefont {Heinosaari}, \citenamefont {Miyadera},\ and\
  \citenamefont {Ziman}(2016)}]{Heinosaari2016}%
  \BibitemOpen
  \bibfield  {author} {\bibinfo {author} {\bibfnamefont {T.}~\bibnamefont
  {Heinosaari}}, \bibinfo {author} {\bibfnamefont {T.}~\bibnamefont
  {Miyadera}}, \ and\ \bibinfo {author} {\bibfnamefont {M.}~\bibnamefont
  {Ziman}},\ }\bibfield  {title} {\enquote {\bibinfo {title} {{An invitation to
  quantum incompatibility}},}\ }\href {\doibase 10.1088/1751-8113/49/12/123001}
  {\bibfield  {journal} {\bibinfo  {journal} {Journal of Physics A:
  Mathematical and Theoretical}\ }\textbf {\bibinfo {volume} {49}},\ \bibinfo
  {pages} {123001} (\bibinfo {year} {2016})},\ \Eprint
  {http://arxiv.org/abs/1511.07548} {arXiv:1511.07548} \BibitemShut {NoStop}%
\bibitem [{\citenamefont {Fuchs}\ and\ \citenamefont
  {Peres}(1996)}]{Fuchs1996}%
  \BibitemOpen
  \bibfield  {author} {\bibinfo {author} {\bibfnamefont {C.~A.}\ \bibnamefont
  {Fuchs}}\ and\ \bibinfo {author} {\bibfnamefont {A.}~\bibnamefont {Peres}},\
  }\bibfield  {title} {\enquote {\bibinfo {title} {Quantum-state disturbance
  versus information gain: Uncertainty relations for quantum information},}\
  }\href {\doibase 10.1103/PhysRevA.53.2038} {\bibfield  {journal} {\bibinfo
  {journal} {Phys. Rev. A}\ }\textbf {\bibinfo {volume} {53}},\ \bibinfo
  {pages} {2038--2045} (\bibinfo {year} {1996})}\BibitemShut {NoStop}%
\bibitem [{\citenamefont {Ozawa}(2003)}]{Ozawa2003}%
  \BibitemOpen
  \bibfield  {author} {\bibinfo {author} {\bibfnamefont {M.}~\bibnamefont
  {Ozawa}},\ }\bibfield  {title} {\enquote {\bibinfo {title} {Universally valid
  reformulation of the heisenberg uncertainty principle on noise and
  disturbance in measurement},}\ }\href {\doibase 10.1103/PhysRevA.67.042105}
  {\bibfield  {journal} {\bibinfo  {journal} {Phys. Rev. A}\ }\textbf {\bibinfo
  {volume} {67}},\ \bibinfo {pages} {042105} (\bibinfo {year}
  {2003})}\BibitemShut {NoStop}%
\bibitem [{\citenamefont {Ozawa}(2004)}]{Ozawa2004}%
  \BibitemOpen
  \bibfield  {author} {\bibinfo {author} {\bibfnamefont {M.}~\bibnamefont
  {Ozawa}},\ }\bibfield  {title} {\enquote {\bibinfo {title} {{Uncertainty
  relations for noise and disturbance in generalized quantum measurements}},}\
  }\href {\doibase 10.1016/j.aop.2003.12.012} {\bibfield  {journal} {\bibinfo
  {journal} {Annals of Physics}\ }\textbf {\bibinfo {volume} {311}},\ \bibinfo
  {pages} {350--416} (\bibinfo {year} {2004})},\ \Eprint
  {http://arxiv.org/abs/0307057} {arXiv:0307057 [quant-ph]} \BibitemShut
  {NoStop}%
\bibitem [{\citenamefont {Kretschmann}, \citenamefont {Schlingemann},\ and\
  \citenamefont {Werner}(2008)}]{Kretschmann2008}%
  \BibitemOpen
  \bibfield  {author} {\bibinfo {author} {\bibfnamefont {D.}~\bibnamefont
  {Kretschmann}}, \bibinfo {author} {\bibfnamefont {D.}~\bibnamefont
  {Schlingemann}}, \ and\ \bibinfo {author} {\bibfnamefont {R.~F.}\
  \bibnamefont {Werner}},\ }\bibfield  {title} {\enquote {\bibinfo {title} {The
  information-disturbance tradeoff and the continuity of stinespring's
  representation},}\ }\href {\doibase 10.1109/TIT.2008.917696} {\bibfield
  {journal} {\bibinfo  {journal} {IEEE Transactions on Information Theory}\
  }\textbf {\bibinfo {volume} {54}},\ \bibinfo {pages} {1708--1717} (\bibinfo
  {year} {2008})}\BibitemShut {NoStop}%
\bibitem [{\citenamefont {Buscemi}\ \emph {et~al.}(2014)\citenamefont
  {Buscemi}, \citenamefont {Hall}, \citenamefont {Ozawa},\ and\ \citenamefont
  {Wilde}}]{Buscemi2013}%
  \BibitemOpen
  \bibfield  {author} {\bibinfo {author} {\bibfnamefont {F.}~\bibnamefont
  {Buscemi}}, \bibinfo {author} {\bibfnamefont {M.~J.~W.}\ \bibnamefont
  {Hall}}, \bibinfo {author} {\bibfnamefont {M.}~\bibnamefont {Ozawa}}, \ and\
  \bibinfo {author} {\bibfnamefont {M.~M.}\ \bibnamefont {Wilde}},\ }\bibfield
  {title} {\enquote {\bibinfo {title} {Noise and disturbance in quantum
  measurements: An information-theoretic approach},}\ }\href {\doibase
  10.1103/PhysRevLett.112.050401} {\bibfield  {journal} {\bibinfo  {journal}
  {Phys. Rev. Lett.}\ }\textbf {\bibinfo {volume} {112}},\ \bibinfo {pages}
  {050401} (\bibinfo {year} {2014})}\BibitemShut {NoStop}%
\bibitem [{\citenamefont {Busch}, \citenamefont {Lahti},\ and\ \citenamefont
  {Werner}(2013)}]{Busch2013}%
  \BibitemOpen
  \bibfield  {author} {\bibinfo {author} {\bibfnamefont {P.}~\bibnamefont
  {Busch}}, \bibinfo {author} {\bibfnamefont {P.}~\bibnamefont {Lahti}}, \ and\
  \bibinfo {author} {\bibfnamefont {R.~F.}\ \bibnamefont {Werner}},\ }\bibfield
   {title} {\enquote {\bibinfo {title} {Proof of heisenberg's error-disturbance
  relation},}\ }\href {\doibase 10.1103/PhysRevLett.111.160405} {\bibfield
  {journal} {\bibinfo  {journal} {Phys. Rev. Lett.}\ }\textbf {\bibinfo
  {volume} {111}},\ \bibinfo {pages} {160405} (\bibinfo {year}
  {2013})}\BibitemShut {NoStop}%
\bibitem [{\citenamefont {Heinosaari}\ and\ \citenamefont
  {Miyadera}(2013)}]{Heinosaari2013}%
  \BibitemOpen
  \bibfield  {author} {\bibinfo {author} {\bibfnamefont {T.}~\bibnamefont
  {Heinosaari}}\ and\ \bibinfo {author} {\bibfnamefont {T.}~\bibnamefont
  {Miyadera}},\ }\bibfield  {title} {\enquote {\bibinfo {title} {{Qualitative
  noise-disturbance relation for quantum measurements}},}\ }\href {\doibase
  10.1103/PhysRevA.88.042117} {\bibfield  {journal} {\bibinfo  {journal}
  {Physical Review A - Atomic, Molecular, and Optical Physics}\ }\textbf
  {\bibinfo {volume} {88}},\ \bibinfo {pages} {1--7} (\bibinfo {year}
  {2013})},\ \Eprint {http://arxiv.org/abs/1303.5530} {arXiv:1303.5530}
  \BibitemShut {NoStop}%
\bibitem [{\citenamefont {Busch}\ and\ \citenamefont
  {Lahti}(1989)}]{Busch1989}%
  \BibitemOpen
  \bibfield  {author} {\bibinfo {author} {\bibfnamefont {P.}~\bibnamefont
  {Busch}}\ and\ \bibinfo {author} {\bibfnamefont {P.~J.}\ \bibnamefont
  {Lahti}},\ }\bibfield  {title} {\enquote {\bibinfo {title} {{The
  determination of the past and the future of a physical system in quantum
  mechanics}},}\ }\href {\doibase 10.1007/BF00731904} {\bibfield  {journal}
  {\bibinfo  {journal} {Foundations of Physics}\ }\textbf {\bibinfo {volume}
  {19}},\ \bibinfo {pages} {633--678} (\bibinfo {year} {1989})}\BibitemShut
  {NoStop}%
\bibitem [{\citenamefont {Heinonen}(2005)}]{Heinonen2005}%
  \BibitemOpen
  \bibfield  {author} {\bibinfo {author} {\bibfnamefont {T.}~\bibnamefont
  {Heinonen}},\ }\bibfield  {title} {\enquote {\bibinfo {title} {{Optimal
  measurements in quantum mechanics}},}\ }\href {\doibase
  10.1016/j.physleta.2005.08.003} {\bibfield  {journal} {\bibinfo  {journal}
  {Physics Letters, Section A: General, Atomic and Solid State Physics}\
  }\textbf {\bibinfo {volume} {346}},\ \bibinfo {pages} {77--86} (\bibinfo
  {year} {2005})},\ \Eprint {http://arxiv.org/abs/0508063} {arXiv:0508063
  [quant-ph]} \BibitemShut {NoStop}%
\bibitem [{\citenamefont {Braginsky}, \citenamefont {Vorontsov},\ and\
  \citenamefont {Thorne}(1980)}]{Braginsky1980}%
  \BibitemOpen
  \bibfield  {author} {\bibinfo {author} {\bibfnamefont {V.~B.}\ \bibnamefont
  {Braginsky}}, \bibinfo {author} {\bibfnamefont {Y.~I.}\ \bibnamefont
  {Vorontsov}}, \ and\ \bibinfo {author} {\bibfnamefont {K.~S.}\ \bibnamefont
  {Thorne}},\ }\bibfield  {title} {\enquote {\bibinfo {title} {{Quantum
  Nondemolition Measurements}},}\ }\href@noop {} {\bibfield  {journal}
  {\bibinfo  {journal} {Science}\ }\textbf {\bibinfo {volume} {209}},\ \bibinfo
  {pages} {547--557} (\bibinfo {year} {1980})}\BibitemShut {NoStop}%
\bibitem [{\citenamefont {Grangier}, \citenamefont {Levenson},\ and\
  \citenamefont {Poizat}(1998)}]{Grangier1998}%
  \BibitemOpen
  \bibfield  {author} {\bibinfo {author} {\bibfnamefont {P.}~\bibnamefont
  {Grangier}}, \bibinfo {author} {\bibfnamefont {J.~A.}\ \bibnamefont
  {Levenson}}, \ and\ \bibinfo {author} {\bibfnamefont {J.-P.}\ \bibnamefont
  {Poizat}},\ }\bibfield  {title} {\enquote {\bibinfo {title} {{Quantum
  non-demolition measurements in optics}},}\ }\href {\doibase 10.1038/25059}
  {\bibfield  {journal} {\bibinfo  {journal} {Nature}\ }\textbf {\bibinfo
  {volume} {396}},\ \bibinfo {pages} {537--542} (\bibinfo {year} {1998})},\
  \Eprint {http://arxiv.org/abs/arXiv:1011.1669v3} {arXiv:arXiv:1011.1669v3}
  \BibitemShut {NoStop}%
\bibitem [{\citenamefont {Lupascu}\ \emph {et~al.}(2007)\citenamefont
  {Lupascu}, \citenamefont {Saito}, \citenamefont {Picot}, \citenamefont {{De
  Groot}}, \citenamefont {Harmans},\ and\ \citenamefont {Mooij}}]{Lupascu2007}%
  \BibitemOpen
  \bibfield  {author} {\bibinfo {author} {\bibfnamefont {A.}~\bibnamefont
  {Lupascu}}, \bibinfo {author} {\bibfnamefont {S.}~\bibnamefont {Saito}},
  \bibinfo {author} {\bibfnamefont {T.}~\bibnamefont {Picot}}, \bibinfo
  {author} {\bibfnamefont {P.~C.}\ \bibnamefont {{De Groot}}}, \bibinfo
  {author} {\bibfnamefont {C.~J. P.~M.}\ \bibnamefont {Harmans}}, \ and\
  \bibinfo {author} {\bibfnamefont {J.~E.}\ \bibnamefont {Mooij}},\ }\bibfield
  {title} {\enquote {\bibinfo {title} {{Quantum non-demolition measurement of a
  superconducting two-level system}},}\ }\href {\doibase 10.1038/nphys509}
  {\bibfield  {journal} {\bibinfo  {journal} {Nature Physics}\ }\textbf
  {\bibinfo {volume} {3}},\ \bibinfo {pages} {119--123} (\bibinfo {year}
  {2007})},\ \Eprint {http://arxiv.org/abs/0611505} {arXiv:0611505 [cond-mat]}
  \BibitemShut {NoStop}%
\bibitem [{\citenamefont {Kraus}(1983)}]{Kraus1983}%
  \BibitemOpen
  \bibfield  {author} {\bibinfo {author} {\bibfnamefont {K.}~\bibnamefont
  {Kraus}},\ }\bibfield  {title} {\enquote {\bibinfo {title} {States, effects,
  and operations fundamental notions of quantum theory},}\ \ }(\bibinfo {year}
  {1983})\BibitemShut {NoStop}%
\bibitem [{\citenamefont {Brown}\ and\ \citenamefont
  {Pearcy}(1977)}]{BrownPearcy}%
  \BibitemOpen
  \bibfield  {author} {\bibinfo {author} {\bibfnamefont {A.}~\bibnamefont
  {Brown}}\ and\ \bibinfo {author} {\bibfnamefont {C.}~\bibnamefont {Pearcy}},\
  }\href {\doibase 10.1007/978-1-4612-9926-4} {\emph {\bibinfo {title}
  {{Introduction to Operator Theory I}}}}\ (\bibinfo  {publisher} {Springer},\
  \bibinfo {year} {1977})\BibitemShut {NoStop}%
\bibitem [{\citenamefont {Heinosaari}\ and\ \citenamefont
  {Ziman}(2011)}]{Heinosaari2011mathematical}%
  \BibitemOpen
  \bibfield  {author} {\bibinfo {author} {\bibfnamefont {T.}~\bibnamefont
  {Heinosaari}}\ and\ \bibinfo {author} {\bibfnamefont {M.}~\bibnamefont
  {Ziman}},\ }\href@noop {} {\emph {\bibinfo {title} {The mathematical language
  of quantum theory: from uncertainty to entanglement}}}\ (\bibinfo
  {publisher} {Cambridge University Press},\ \bibinfo {year}
  {2011})\BibitemShut {NoStop}%
\bibitem [{\citenamefont {Renes}\ \emph {et~al.}(2004)\citenamefont {Renes},
  \citenamefont {Blume-Kohout}, \citenamefont {Scott},\ and\ \citenamefont
  {Caves}}]{Renes2004}%
  \BibitemOpen
  \bibfield  {author} {\bibinfo {author} {\bibfnamefont {J.~M.}\ \bibnamefont
  {Renes}}, \bibinfo {author} {\bibfnamefont {R.}~\bibnamefont {Blume-Kohout}},
  \bibinfo {author} {\bibfnamefont {A.~J.}\ \bibnamefont {Scott}}, \ and\
  \bibinfo {author} {\bibfnamefont {C.~M.}\ \bibnamefont {Caves}},\ }\bibfield
  {title} {\enquote {\bibinfo {title} {Symmetric informationally complete
  quantum measurements},}\ }\href {\doibase
  http://dx.doi.org/10.1063/1.1737053} {\bibfield  {journal} {\bibinfo
  {journal} {Journal of Mathematical Physics}\ }\textbf {\bibinfo {volume}
  {45}},\ \bibinfo {pages} {2171--2180} (\bibinfo {year} {2004})}\BibitemShut
  {NoStop}%
\bibitem [{\citenamefont {Heinosaari}\ and\ \citenamefont
  {Wolf}(2010)}]{Heinosaari2010}%
  \BibitemOpen
  \bibfield  {author} {\bibinfo {author} {\bibfnamefont {T.}~\bibnamefont
  {Heinosaari}}\ and\ \bibinfo {author} {\bibfnamefont {M.~M.}\ \bibnamefont
  {Wolf}},\ }\bibfield  {title} {\enquote {\bibinfo {title} {{Nondisturbing
  quantum measurements}},}\ }\href {\doibase 10.1063/1.3480658} {\bibfield
  {journal} {\bibinfo  {journal} {Journal of Mathematical Physics}\ }\textbf
  {\bibinfo {volume} {51}},\ \bibinfo {pages} {092201} (\bibinfo {year}
  {2010})}\BibitemShut {NoStop}%
\bibitem [{\citenamefont {Lindblad}(1999)}]{Lindblad1999}%
  \BibitemOpen
  \bibfield  {author} {\bibinfo {author} {\bibfnamefont {G.}~\bibnamefont
  {Lindblad}},\ }\bibfield  {title} {\enquote {\bibinfo {title} {{A general
  no-cloning theorem}},}\ }\href {\doibase 10.1023/A:1007581027660} {\bibfield
  {journal} {\bibinfo  {journal} {Letters in Mathematical Physics}\ }\textbf
  {\bibinfo {volume} {47}},\ \bibinfo {pages} {189--196} (\bibinfo {year}
  {1999})}\BibitemShut {NoStop}%
\bibitem [{\citenamefont {Takesaki}(1979)}]{Takesaki2002}%
  \BibitemOpen
  \bibfield  {author} {\bibinfo {author} {\bibfnamefont {M.}~\bibnamefont
  {Takesaki}},\ }\href {\doibase 10.1007/978-1-4612-6188-9} {\emph {\bibinfo
  {title} {{Theory of Operator Algebras I}}}}\ (\bibinfo  {publisher}
  {Springer},\ \bibinfo {year} {1979})\BibitemShut {NoStop}%
\bibitem [{\citenamefont {Heinosaari}\ and\ \citenamefont
  {Miyadera}(2015)}]{Heinosaari2015}%
  \BibitemOpen
  \bibfield  {author} {\bibinfo {author} {\bibfnamefont {T.}~\bibnamefont
  {Heinosaari}}\ and\ \bibinfo {author} {\bibfnamefont {T.}~\bibnamefont
  {Miyadera}},\ }\bibfield  {title} {\enquote {\bibinfo {title} {Universality
  of sequential quantum measurements},}\ }\href {\doibase
  10.1103/PhysRevA.91.022110} {\bibfield  {journal} {\bibinfo  {journal} {Phys.
  Rev. A}\ }\textbf {\bibinfo {volume} {91}},\ \bibinfo {pages} {022110}
  (\bibinfo {year} {2015})}\BibitemShut {NoStop}%
\end{thebibliography}%

\end{document}